\documentclass[11pt]{amsart}

\usepackage[utf8]{inputenc}
\usepackage[T1]{fontenc}
\usepackage{amsmath,amssymb,amsthm,mathtools}
\usepackage{mathrsfs,bm}
\usepackage{enumitem,booktabs,array,tabularx}
\usepackage{hyperref,cleveref}
\usepackage[a4paper,margin=1in]{geometry}
\usepackage{forest}
\usepackage[section]{placeins}
\hypersetup{
  colorlinks=true,
  linkcolor=blue,
  citecolor=blue,
  urlcolor=blue,
  pdftitle={Formal Grammars and Generating Functions},
  pdfauthor={Zhijun Wang}
}

\newtheorem{theorem}{Theorem}[section]

\newtheorem{corollary}[theorem]{Corollary}
\theoremstyle{definition}
\newtheorem{definition}[theorem]{Definition}

\newcommand{\Z}{\mathbb{Z}}
\newcommand{\Q}{\mathbb{Q}}

\newcommand{\C}{\mathbb{C}}
\newcommand{\RE}{\mathbf{RE}}
\newcommand{\Rec}{\mathbf{Rec}}

\title[Formal Grammars and Generating Functions]{Formal Grammars and Generating Functions:\\
From Algebraic Structure to Non-Algebraic and Non-Derivable Series}
\author{Zhijun Wang}
\address{St Peter's College, Department of Engineering Science\\
University of Oxford\\
Oxford, UK}
\email{zhijun.wang@ox.ac.uk}
\date{}

\begin{document}

\begin{abstract}
This paper studies the relationship between formal grammars, recursive
production systems, computability, and ordinary length generating functions.
We first recall the classical Chomsky--Schutzenberger theorem and the
commutative translation usually associated with the DSV method. We then
distinguish DSV systems satisfying a finite-algebra condition from more general
grammar-to-generating-function constructions. In particular, the language
$\{a^n b^n c^n:n\geq1\}$, although not context-free, is indexed and has the
rational length generating function
$z^3/(1-z^3)$. We show that the finite-algebra condition is sufficient for a polynomial
DSV system to have an algebraic target generating function. The converse is false: the
same rational generating function can arise from languages of radically
different grammatical complexity.

We next establish a computability principle. If a recursively enumerable
language has a computable length-counting sequence, then the language itself
is recursive. Consequently, for every non-recursive recursively enumerable
language, the coefficients of its ordinary length generating function form a
non-computable sequence. We further deduce that a recursively enumerable
language with a $D$-finite length generating function must be decidable.

For recursive languages, by contrast, the coefficient sequence is always
computable, independently of whether the generating function is algebraic,
D-finite, or transcendental.

Finally, we examine singularities of generating functions with non-negative
integer coefficients. We relate productive recursion to infinite languages
and finite radii of convergence, and review the Polya--Carlson dichotomy,
Hadamard gap phenomena, algebraic functions, D-finite transcendental
functions, and Siegel's $G$-functions.

The resulting picture is inherently asymmetric:
\[
\text{grammar}\longrightarrow\text{functional equations}
\longrightarrow\text{generating function}\longrightarrow\text{analytic structure},
\]
whereas the reverse implications generally require additional information
not contained in the ordinary length generating function.
\end{abstract}

\maketitle
{\small\tableofcontents}

\section{Introduction and Preliminaries}

Let $L\subseteq\Sigma^*$ be a formal language over a finite alphabet
$\Sigma$. Its ordinary length generating function is
\begin{equation}
F_L(z)=\sum_{n\geq0}\ell_n z^n,
\qquad
\ell_n=\#\{w\in L:|w|=n\}.
\label{eq:language-ogf}
\end{equation}

The map
\[
L\longmapsto F_L
\]
forgets the identity of individual words and retains only the number of
accepted words of each length. Consequently, very different languages may
have exactly the same ordinary generating function.

We use the following algebraic and differential classifications throughout.

\begin{definition}[Algebraic generating function]
A formal power series $F(z)\in\Q[[z]]$ is \emph{algebraic over $\Q(z)$}
if there exists a nonzero polynomial $P(z,y)\in\Q[z,y]$ such that
\[
P(z,F(z))=0.
\]
Otherwise, $F(z)$ is \emph{transcendental over $\Q(z)$}.
\end{definition}

\begin{definition}[$D$-finite generating function]
A formal power series $F(z)\in\Q[[z]]$ is \emph{$D$-finite over $\Q(z)$}
if there are polynomials $p_0(z),\ldots,p_r(z)\in\Q[z]$, with
$r\geq0$ and $p_r\ne0$, such that
\[
p_r(z)F^{(r)}(z)+p_{r-1}(z)F^{(r-1)}(z)
+\cdots+p_0(z)F(z)=0.
\]
All algebraic power series over $\Q(z)$ are $D$-finite, but the
converse need not hold.
\end{definition}

The guiding question is which implications survive the passage
\[
\text{grammatical specification}\longrightarrow
\text{length-counting sequence}\longrightarrow F_L(z)
\longrightarrow\text{analytic behavior}.
\]
The passage from a language to its length-counting sequence is many-to-one:
word identities, parsing information, and membership algorithms are not
recoverable in general. Moreover, these arrows describe mathematical
associations, not a universal effective algorithm for extracting coefficients
from arbitrary grammars. Sections~2--5 distinguish the hypotheses under
which stronger conclusions are available.

\section{From Grammars to Generating Functions}

\subsection{The classical theorem}

We start with the classical algebraicity theorem.

\begin{theorem}[Chomsky--Schutzenberger]
\label{thm:CS}
Let $L\subseteq\Sigma^*$ be generated by an unambiguous context-free
grammar. Define
\[
\ell_n=\#\{w\in L:|w|=n\},
\qquad
F_L(z)=\sum_{n\geq0}\ell_nz^n.
\]
Then $F_L(z)$ is algebraic over $\Q(z)$.
\end{theorem}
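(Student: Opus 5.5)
We prove Theorem~\ref{thm:CS} by the commutative translation (the DSV method). Fix an unambiguous context-free grammar $G=(N,\Sigma,P,S)$ with $L=L(G)$. First normalize $G$ without changing $L$ or unambiguity: remove useless nonterminals, and eliminate $\varepsilon$-productions and unit productions. The empty word only affects $\ell_0$, so it can be set aside. The normalization preserves the number of parse trees per word, hence unambiguity.

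After normalization every right-hand side either has length at least $2$ or is a single terminal. For each nonterminal $A$ let $L_A$ be the language it derives and
\[
F_A(z)=\sum_{w\in L_A} z^{|w|}.
\]
Translate each production $A\to X_1\cdots X_k$ into the monomial $\prod_i \phi(X_i)$, where $\phi(a)=z$ for a terminal $a$ and $\phi(B)=y_B$ for a nonterminal $B$. Sum these monomials over the productions of $A$ to get a polynomial $Q_A(z,\mathbf y)\in\Z_{\ge0}[z,\mathbf y]$.

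Unambiguity is exactly what makes the counts multiply. Each word of $L_A$ has a unique parse tree. The top production of that tree is unique, and so is the factorization of the word into subwords derived by the children. Hence $L_A$ is the disjoint union, over productions, of concatenation products of the $L_{X_i}$, and each such product is itself unambiguous. Comparing coefficients of $z^n$ gives the system
\[
F_A(z)=Q_A\bigl(z,(F_B(z))_{B\in N}\bigr),\qquad A\in N.
\]
Moreover, the normal form makes this system \emph{proper}: every monomial of $Q_A$ has total $z$-degree at least $1$ once each $y_B$ is weighted as $O(z)$. So iterating $\mathbf y\mapsto Q(z,\mathbf y)$ from $\mathbf 0$ converges $z$-adically, and $(F_A)_A$ is its \emph{unique} solution in $(z\Q[[z]])^N$.

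The remaining step, and the main obstacle, is showing that a component of a solution of a polynomial system is algebraic over $\Q(z)$. I would argue via the Jacobian, as in the implicit function theorem for formal power series, combined with elimination. Consider the ideal
\[
I=(y_A-Q_A)_{A\in N}\subseteq \Q(z)[\mathbf y].
\]
The Jacobian $\det(\mathrm{Id}-\partial Q/\partial\mathbf y)$ evaluated at $(z,\mathbf F)$ is a power series with constant term $1$. This holds because properness forces $\partial Q/\partial \mathbf y$ to vanish at $z=0,\mathbf y=0$; here one must check that no monomial is linear in the $y$'s with a constant coefficient, which the elimination of unit productions guarantees. So the point $\mathbf F$ is a nonsingular point of the variety $V(I)$. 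Its irreducible component through $\mathbf F$ therefore has dimension $0$ over $\Q(z)$, since $|N|$ equations with an invertible Jacobian cut out a local dimension of $|N|-|N|=0$.

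Consequently each coordinate $F_A$ is algebraic over $\Q(z)$. Concretely, one can take a Gröbner basis, or resultant eliminations, of the zero-dimensional primary component to produce a nonzero $P_A(z,y)$ with $P_A(z,F_A)=0$. In particular this holds for $F_S$. Finally,
\[
F_L=F_S+\ell_0,
\]
which is still algebraic.

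If the dimension argument feels too geometric, the fallback is a purely algebraic route. The field $K=\Q(z)(F_A:A\in N)$ is finitely generated, so it suffices to show its transcendence degree over $\Q(z)$ is $0$. This follows because the $|N|$ relations have Jacobian of full rank at $\mathbf F$: the differentials $d(y_A-Q_A)$ are independent in $\Omega_{K/\Q(z)}$, which is spanned by the $dF_A$, so $\Omega_{K/\Q(z)}=0$. In characteristic zero this forces the extension to be algebraic.
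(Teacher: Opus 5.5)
Your proof is correct. It shares the paper's first step, the DSV translation: unambiguity turns the choice of top production and the unique factorization into sums and products. The decisive algebraicity step, however, is handled differently.

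The paper simply asserts that eliminating the auxiliary series yields $P(z,F_L)=0$. Its own proven criterion, Theorem~\ref{thm:grammar-finite-algebra}, is global: it needs $\dim_K K[\mathbf y]/I_K<\infty$ and then applies Cayley--Hamilton. You instead argue locally at the power-series solution. Properness, obtained by removing $\varepsilon$- and unit productions, makes $\det(\mathrm{Id}-\partial Q/\partial\mathbf y)$ at $\mathbf F$ a unit of $\Q[[z]]$. Hence $\mathbf F$ lies on no positive-dimensional component; equivalently, $\Omega_{K'/\Q(z)}=0$ for $K'=\Q(z)(F_A:A\in N)$.

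Your local route is genuinely more robust, because the global condition can fail for DSV systems of unambiguous grammars. Take the LL(1) grammar $A_1\to a\mid bc\,A_2\mid de\,A_1A_2A_2$, $A_2\to a\mid fg\,A_2\mid hi\,A_2A_2A_2$. It gives $y_1=z+z^2y_2+z^2y_1y_2^2$ and $y_2=z+z^2y_2+z^2y_2^3$, whose variety contains the whole line $y_2=-1/z$. So the finite-algebra condition fails and $I\cap\Q(z)[y_1]=0$: plain elimination of $y_2$ yields nothing, whereas your Jacobian argument still applies.

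In exchange, the global route produces an explicit annihilating polynomial for every solution at once, with no normalization needed. In your route, an explicit $P_A$ requires first isolating the component through $\mathbf F$, e.g.\ by saturating $I$ with respect to the Jacobian determinant.

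Two wording points need fixing.
\begin{enumerate}
\item $\mathbf F$ has coordinates in $\Q((z))$, not $\Q(z)$. So ``nonsingular point of $V(I)$'' must be read at the prime $\mathfrak p=\ker\bigl(\Q(z)[\mathbf y]\to\Q((z))\bigr)$. There, $\dim_{\kappa(\mathfrak p)}\bigl(\Omega\otimes\kappa(\mathfrak p)\bigr)=0$ bounds the local dimension and forces $\mathfrak p$ to be maximal.
\item In the fallback, the elements $d(F_A-Q_A(z,\mathbf F))$ are zero in $\Omega_{K'/\Q(z)}$, not ``independent''. What you actually use is that these relations read $(\mathrm{Id}-J(\mathbf F))(dF_B)_B=0$, with $\mathrm{Id}-J(\mathbf F)$ invertible over $K'$, which forces all $dF_B=0$.
\end{enumerate}
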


This theorem is one of the fundamental bridges between formal-language theory
and algebraic generating functions
\cite{ChomskySchutzenberger,Flajolet,FlajoletSedgewick}.

The proof proceeds by replacing the productions of an unambiguous
context-free grammar by algebraic equations for the generating functions
associated with the nonterminals. Concatenation becomes multiplication,
disjoint union becomes addition, and terminals contribute powers of the
length variable. Elimination of the auxiliary generating functions then
produces a polynomial equation
\[
P(z,F_L(z))=0.
\]
This commutative translation is usually referred to as the DSV method.

\subsection{A finite-algebra criterion for DSV systems}

A finite list of polynomial equations need not force every formal-power-series
solution to be algebraic: the equations could leave some variables
unconstrained. The following additional condition is sufficient.

\begin{definition}[Finite-algebra condition]
Let $K=\Q(z)$ and let
\[
P_1(z,Y_1,\ldots,Y_m)=\cdots=P_s(z,Y_1,\ldots,Y_m)=0
\]
be polynomial equations with $P_i\in\Q[z,Y_1,\ldots,Y_m]$.
Write
\[
I_K=\langle P_1,\ldots,P_s\rangle
\subseteq K[Y_1,\ldots,Y_m].
\]
We say that the system satisfies the \emph{finite-algebra condition} if
\[
\dim_K\!\bigl(K[Y_1,\ldots,Y_m]/I_K\bigr)<\infty.
\]
\end{definition}

\begin{theorem}[Finite-algebra criterion for a grammar-derived system]
\label{thm:grammar-finite-algebra}
Suppose the DSV translation of a grammatical specification yields a
polynomial system as above with a specified formal-power-series solution
$(Y_1(z),\ldots,Y_m(z))\in\Q[[z]]^m$ and target
$F_L(z)=Y_1(z)$. If the system satisfies the finite-algebra condition,
then $F_L(z)$ is algebraic over $\Q(z)$.
\end{theorem}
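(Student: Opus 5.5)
The plan is to use the finite dimensionality of $A=K[Y_1,\ldots,Y_m]/I_K$, with $K=\Q(z)$, to find a polynomial relation satisfied by the class of $Y_1$. We then transport that relation to the power series $Y_1(z)$ by evaluating at the given formal solution.

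\emph{Step 1: a minimal polynomial in the quotient.} Let $d=\dim_K A<\infty$. The $d+1$ classes $\overline{1},\overline{Y_1},\ldots,\overline{Y_1}^{\,d}$ are then $K$-linearly dependent in $A$. So there is a nonzero polynomial $Q(y)=\sum_{j=0}^d c_j(z)y^j\in K[y]$ with $Q(Y_1)\in I_K$. Concretely,
\[
Q(Y_1)=\sum_{i=1}^s g_i\,P_i,\qquad g_i\in K[Y_1,\ldots,Y_m].
\]
Clearing the denominators of the $c_j$ and $g_i$ with a common nonzero $D(z)\in\Q[z]$ gives a nonzero $\widetilde Q(z,y)=D(z)Q(y)\in\Q[z,y]$ and polynomials $h_i=Dg_i\in\Q[z,Y_1,\ldots,Y_m]$ such that
\[
\widetilde Q(z,Y_1)=\sum_i h_iP_i
\]
holds in $\Q[z,Y_1,\ldots,Y_m]$.

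\emph{Step 2: evaluation at the formal solution.} Consider the ring homomorphism $\Q[z,Y_1,\ldots,Y_m]\to\Q[[z]]$ that fixes $z$ and sends $Y_k\mapsto Y_k(z)$. This is well defined because only polynomial operations occur, so no convergence issue arises. Each $P_i$ maps to $0$, since $(Y_k(z))$ is a solution. Hence $\widetilde Q(z,F_L(z))=0$ in $\Q[[z]]$. Because $\widetilde Q$ is a nonzero element of $\Q[z,y]$, this is exactly the definition of $F_L$ being algebraic over $\Q(z)$.

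\emph{Main obstacle.} The delicate point is the passage between $K$-coefficients and $\Q[z]$-coefficients in Step 1. Ideal membership is certified over the field $K=\Q(z)$, whereas the evaluation map of Step 2 is only defined on polynomial rings over $\Q[z]$: rational functions in $z$ with nonzero denominators need not be invertible in $\Q[[z]]$, for instance when the denominator vanishes at $0$. Multiplying through by the common denominator $D$ avoids ever inverting anything in $\Q[[z]]$. We also need $\widetilde Q\neq0$ as a polynomial in $y$ over $\Q[z]$, which holds because $Q\neq 0$ and $D\neq0$.

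A degenerate case is worth noting. If $I_K$ is the unit ideal, so $d=0$, then $1=\sum g_iP_i$, and the evaluation argument gives $D(z)=0$, a contradiction. The existence of a formal solution therefore rules this case out automatically.
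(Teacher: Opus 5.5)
Your proof is correct and follows essentially the same route as the paper: finite-dimensionality of $K[Y_1,\ldots,Y_m]/I_K$ yields a nonzero $q\in K[t]$ annihilating the class of $Y_1$, and this relation is then transported to the specified formal solution. The differences are minor. The paper obtains $q$ as the characteristic polynomial of multiplication by $y_1$ (Cayley--Hamilton) rather than from the linear dependence of $1,y_1,\ldots,y_1^d$, and it avoids your denominator issue by evaluating in the field $\Q((z))\supseteq\Q(z)$ before clearing denominators, whereas you clear denominators in an explicit ideal-membership certificate and evaluate inside $\Q[[z]]$.
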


\begin{proof}
Let $\mathcal A=K[Y_1,\ldots,Y_m]/I_K$, and let $y_1$ be the residue
class of $Y_1$ in $\mathcal A$. Since $\mathcal A$ is finite-dimensional
over $K$, multiplication by $y_1$ defines a $K$-linear operator
$\mu_{y_1}:\mathcal A\to\mathcal A$. Its characteristic polynomial
$q(t)\in K[t]$ is monic and nonzero. By the Cayley--Hamilton theorem,
$q(\mu_{y_1})=0$; evaluating at the class of $1$ yields
$q(y_1)=0$ in $\mathcal A$.

The specified solution satisfies each polynomial equation $P_i=0$.
Substituting $Y_j=Y_j(z)$ therefore gives $q(Y_1(z))=0$ in the
field $\Q((z))$ of formal Laurent series. Clearing the denominators of
$q$ yields a nonzero polynomial $Q(z,t)\in\Q[z,t]$ such that
\[
Q(z,F_L(z))=0.
\]
Thus $F_L(z)$ is algebraic over $\Q(z)$.
\end{proof}

The finite-algebra condition is \emph{sufficient}, not necessary for the
algebraicity of an individual target series. In particular, the general
algebraicity theorem for unambiguous context-free languages does not
require proving that every chosen presentation of its DSV equations
satisfies this stronger condition. The criterion concerns the system,
not the grammatical complexity of the language it describes.

\subsection{Beyond context-free grammars: a rational indexed-language example}

Consider the classical non-context-free language
\[
L_{abc}=\{a^n b^n c^n:n\geq 1\}.
\]
It is context-sensitive and, more specifically, indexed. Its length
generating function is rational, despite the fact that the language is not
context-free.

\paragraph{A context-sensitive grammar.}
One presentation uses the productions
\[
\begin{aligned}
S  &\to aSBC \mid aBC,\\
CB &\to BC,\\
aB &\to ab,\qquad bB\to bb,\\
bC &\to bc,\qquad cC\to cc.
\end{aligned}
\]
For any $n\geq1$, the first productions give
\[
S\Rightarrow^*a^n(BC)^n.
\]
The swaps $CB\to BC$ can collect the markers into
$a^nB^nC^n$, after which the context-sensitive terminal rules produce
$a^nb^nc^n$. Conversely, a successful terminal derivation has the same
numbers of $a$'s, $b$'s, and $c$'s, in that order.

\emph{The standard DSV substitution must not be applied directly to this
grammar.} For example, the context-sensitive rule $aB\to ab$ does not
imply $zB(z)=z^2$: rewriting $B$ is only permitted in the specified
left context. Moreover, the swaps $CB\to BC$ are essential for successful
derivations; they cannot simply be omitted from a counting system.

\paragraph{An indexed grammar and stack-indexed generating functions.}
An indexed grammar for the same language is
\[
\begin{aligned}
S[\sigma]&\to T[g\sigma],\\
T[\sigma]&\to T[f\sigma]\mid A[\sigma]B[\sigma]C[\sigma],\\
A[f\sigma]&\to aA[\sigma],& A[g\sigma]&\to a,\\
B[f\sigma]&\to bB[\sigma],& B[g\sigma]&\to b,\\
C[f\sigma]&\to cC[\sigma],& C[g\sigma]&\to c.
\end{aligned}
\]
Here $g$ marks the bottom of the index stack, and the expansion of $T$
copies the same stack into all three branches. Starting from
$S[\varepsilon]$, the stacks encountered after the initial rule have
the form $f^kg$, for $k\geq0$.

Let $A_k(z)$, $B_k(z)$, $C_k(z)$, and $T_k(z)$ count terminal words
derivable from $A[f^kg]$, $B[f^kg]$, $C[f^kg]$, and $T[f^kg]$,
respectively. The three terminal-producing branches satisfy
\[
\begin{aligned}
A_0(z)&=B_0(z)=C_0(z)=z,\\
A_{k+1}(z)&=zA_k(z),\\
B_{k+1}(z)&=zB_k(z),\\
C_{k+1}(z)&=zC_k(z).
\end{aligned}
\]
Consequently,
\[
A_k(z)=B_k(z)=C_k(z)=z^{k+1}.
\]
The two alternatives for $T[f^kg]$ lead to the
\emph{stack-indexed} recurrence
\[
T_k(z)=T_{k+1}(z)+A_k(z)B_k(z)C_k(z)
      =T_{k+1}(z)+z^{3(k+1)}.
\]
The contributions have different lengths, so no terminal word is
double-counted. Iterating gives
\[
T_0(z)=T_N(z)+\sum_{k=0}^{N-1}z^{3(k+1)}.
\]
Every word counted by $T_N$ has length at least $3(N+1)$; hence
$T_N(z)\to0$ coefficientwise in the formal-power-series topology.
Since $F_{abc}(z)=T_0(z)$, we obtain
\[
\boxed{
F_{abc}(z)=\sum_{n\geq1}z^{3n}=\frac{z^3}{1-z^3}.
}
\]

This is a DSV-\emph{type} derivation using an infinite family of
stack-indexed series, not a finite polynomial DSV system obtained
by treating context-sensitive rewrite rules as context-free.
The indexed-grammar approach is related to the framework developed
by Adams, Freden, and Mishna \cite{AdamsFredenMishna}.

This example anticipates the non-converse results of Section~3. In particular,
\[
\boxed{\text{grammatical complexity}\not\equiv
\text{complexity of the length generating function}.}
\]

\section{Why Generating Functions Do Not Determine Grammar Classes}

A regular language has a rational length generating function, and an
unambiguous context-free language has an algebraic one. The reverse
implications fail for individual languages:
\[
F_L(z)\text{ rational}\not\Longrightarrow L\text{ regular},
\qquad
F_L(z)\text{ algebraic}\not\Longrightarrow
L\text{ unambiguous context-free}.
\]
The example of $L_{abc}$ in Section~2 already defeats the first converse;
the following diagonal construction shows that even the indexed-language
class cannot be inferred from a rational generating function.

\begin{theorem}[Rational length growth does not imply indexed structure]
\label{thm:diag}
There exists a recursive language \(L_{\mathrm{diag}}\subseteq\{0,1\}^*\) such that
\begin{enumerate}[label=(\roman*)]
\item \(L_{\mathrm{diag}}\) is recursive but not indexed, and therefore no indexed grammar generates it;
\item \(|L_{\mathrm{diag}}\cap\{0,1\}^n|=1\) for every \(n\ge0\);
\item \(F_{L_{\mathrm{diag}}}(z)=1/(1-z)\).
\end{enumerate}
\end{theorem}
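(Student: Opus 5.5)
We construct $L_{\mathrm{diag}}$ by diagonalizing against an effective enumeration of indexed grammars while keeping exactly one word of each length. Fix a computable enumeration $G_0,G_1,G_2,\ldots$ of all indexed grammars over the terminal alphabet $\{0,1\}$. We use the standard fact that indexed languages form a subclass of the context-sensitive languages with uniformly decidable membership: given $(i,w)$, one can decide whether $w\in L(G_i)$. This holds because every indexed grammar can be effectively converted to an equivalent context-sensitive grammar, and context-sensitive membership is decidable by bounded search over sentential forms. This is the one external ingredient, and it is the step I expect to need the most care. If I want to avoid the conversion, I would instead cite Aho's result that indexed languages are recognized by nested stack automata and are therefore decidable uniformly in the grammar.

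Assign lengths to grammars injectively and computably, for instance length $n$ to grammar $G_n$. For each $n\ge0$, define the unique word of length $n$ in $L_{\mathrm{diag}}$ as follows. Consider the two candidates $0^n$ and $10^{n-1}$, or for $n=0$ the empty word only.
\begin{itemize}
\item If $n\ge1$ and $0^n\in L(G_n)$, put $10^{n-1}$ into $L_{\mathrm{diag}}$.
\item Otherwise, put $0^n$ into $L_{\mathrm{diag}}$.
\end{itemize}
For $n=0$, always put $\varepsilon$ in. This sacrifices $G_0$, which is harmless because every grammar occurs infinitely often in the enumeration; padding with useless productions guarantees this. By construction, item (ii) holds, and item (iii) follows immediately since $\ell_n=1$ for all $n$, giving $F(z)=\sum z^n=1/(1-z)$.

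Recursiveness is proved directly. On input $w$ with $|w|=n$, decide whether $0^n\in L(G_n)$ using uniform decidability. This determines the chosen word of length $n$, and we accept exactly when $w$ equals it.

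For the non-indexed claim, suppose $L_{\mathrm{diag}}=L(G)$ for some indexed grammar $G$. Pick $n\ge1$ with $G_n=G$.
\begin{itemize}
\item If $0^n\in L(G)$, then $L_{\mathrm{diag}}$ contains $10^{n-1}$ and, having only one word of length $n$, omits $0^n$. This contradicts $0^n\in L(G)=L_{\mathrm{diag}}$.
\item If $0^n\notin L(G)$, then $0^n\in L_{\mathrm{diag}}$, again a contradiction.
\end{itemize}
Hence no indexed grammar generates $L_{\mathrm{diag}}$. This establishes (i); together with the two facts above it completes the theorem.
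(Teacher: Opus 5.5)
Your proof is correct and is essentially the paper's diagonalization: one word per length chosen to disagree with the $n$-th indexed grammar, with recursiveness coming from uniform decidability of indexed membership. Your two-candidate flip at $0^n$ is a leaner variant of the paper's choice of the lexicographically first word of length $i+1$ outside $L(G_i)$, needing one membership query per length and no counting argument for the case $\{0,1\}^{i+1}\subseteq L(G_i)$.

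One small fix: padding with useless productions makes each \emph{language}, not each grammar, recur infinitely often, so you should pick $n\ge1$ with $L(G_n)=L(G)$. Alternatively, diagonalize against $G_{n-1}$ at length $n$, as the paper does, and drop the padding remark altogether.
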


\begin{proof}
Enumerate all indexed grammars effectively as \(G_0,G_1,G_2,\dots\).
Membership in an indexed language is decidable.
For each \(i\ge0\) let \(B_{i+1}=\{0,1\}^{i+1}\).
If \(B_{i+1}\setminus L(G_i)\) is non-empty, choose its lexicographically first element \(w_{i+1}\); otherwise set \(w_{i+1}=0^{i+1}\).
Put \(w_0=\varepsilon\) and define \(L_{\mathrm{diag}}=\{w_n:n\ge0\}\).
Exactly one word of each length is selected, so the generating function is \(1/(1-z)\).
The language is recursive because membership of a word of length \(n\) reduces to the computation of the single word \(w_n\) by finitely many decidable tests.
Suppose for contradiction that \(L_{\mathrm{diag}}=L(G_i)\) for some \(i\).
If \(B_{i+1}\setminus L(G_i)\) is non-empty then \(w_{i+1}\notin L(G_i)\), a contradiction.
If \(B_{i+1}\subseteq L(G_i)\) then \(G_i\) contains \(2^{i+1}\) words of length \(i+1\), while \(L_{\mathrm{diag}}\) contains only one, again a contradiction.
Hence \(L_{\mathrm{diag}}\) is not indexed, so no indexed grammar generates it.
\end{proof}

The same rational series \(1/(1-z)\) is realised by regular, context-free, indexed and highly non-indexed languages.
Consequently,
\begin{equation}
\boxed{F_L\text{ algebraic}\quad\not\Longrightarrow\quad
L\text{ is indexed}.}
\end{equation}
The conclusion concerns indexed \emph{grammatical descriptions}, not algebraic
relations for the length series: here the same series still satisfies
$(1-z)F_L(z)-1=0$.

Recursive languages with transcendental length generating functions may
nevertheless admit effective finite grammatical descriptions beyond finite
algebraic DSV systems. The indexed-grammar extension of DSV developed by
Adams, Freden and Mishna can yield transcendental, lacunary series with
natural boundaries through infinite stack-indexed generating-function
systems derived from finite indexed grammars \cite{AdamsFredenMishna}.

This construction separates two questions: whether a language has a
particular type of grammar, and whether its length generating function admits
an algebraic description. The same coefficient sequence may correspond to
languages of very different grammatical complexity. Consequently, the
analytic class of $F_L$ cannot identify the Chomsky class of $L$.

\section{Computability of Length-Counting Sequences}

\subsection{Length counting and computability}

The ordinary generating function retains only the sequence
$(\ell_n)_{n\geq0}$. This raises a natural computability question:
Does computability of $(\ell_n)$ imply decidability of $L$? For recursively
enumerable languages the answer is yes.

\begin{theorem}[Computability obstruction]
\label{thm:III}
Let $L\in\RE$. If the sequence
$\ell_n=\#\{w\in L:|w|=n\}$ is computable, then $L\in\Rec$.
Consequently,
\[
L\in\RE\setminus\Rec
\quad\Longrightarrow\quad
(\ell_n)_{n\geq0}\text{ is not computable}.
\]
\end{theorem}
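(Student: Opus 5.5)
The plan is the standard ``count-then-enumerate'' decision procedure. Since $L\in\RE$, fix a Turing machine (equivalently, an effective enumeration) $E$ that lists the elements of $L$, possibly with repetitions, as $u_0,u_1,u_2,\ldots$. By hypothesis there is a total algorithm computing $n\mapsto\ell_n$. Given an input word $w\in\Sigma^*$ with $|w|=n$, we decide membership as follows.
\begin{enumerate}[label=(\arabic*)]
\item Compute $\ell_n$. This is a natural number, finite because $\ell_n\le|\Sigma|^n$.
\item Run $E$ and keep a finite list $S$ of the \emph{distinct} words of length $n$ seen so far. Stop as soon as $\#S=\ell_n$. If $\ell_n=0$, stop immediately with $S=\emptyset$.
\item Accept $w$ iff $w\in S$.
\end{enumerate}

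Two facts then need checking, in this order. First, termination. The set $L\cap\Sigma^n$ has exactly $\ell_n$ elements, and each one appears at some finite stage of the enumeration. So after finitely many steps all $\ell_n$ of them have appeared, and step (2) halts. Here it matters that $\ell_n$ is the exact count and not merely an upper bound; otherwise the search might wait forever for a word that never comes.

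Second, correctness. Every word put into $S$ lies in $L\cap\Sigma^n$. When $\#S=\ell_n=\#(L\cap\Sigma^n)$, we therefore have $S=L\cap\Sigma^n$. Hence $w\in S$ iff $w\in L$. The procedure is total and correct on every input, so $L\in\Rec$.

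The ``consequently'' clause is the contrapositive. If $L\in\RE\setminus\Rec$ and $(\ell_n)$ were computable, the argument above would give $L\in\Rec$, a contradiction.

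I expect no serious obstacle. The only delicate point is in step (2): we must count distinct words, since the enumeration may repeat, and we must use exact equality with $\ell_n$. It is also worth noting that the algorithm for $\ell_n$ is used as a black box. The procedure is uniform in $n$, so no non-uniform choice of finitely many constants is hiding in it.
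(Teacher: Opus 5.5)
Your proof is correct and follows the paper's argument: compute $\ell_n$, reject if $\ell_n=0$, otherwise run the enumeration of $L$ until exactly $\ell_n$ distinct words of length $n$ have appeared, then check whether $w$ is among them. Your remarks that you must count distinct words and must use the exact count rather than an upper bound make explicit the termination and correctness points that the paper states only briefly.
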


\begin{proof}
Assume that $L$ is recursively enumerable and that $\ell_n$ is computable.
Given a word $w$ of length $n$, first compute $\ell_n$. If $\ell_n=0$,
then $w\notin L$, and we reject. Suppose instead that $\ell_n>0$.
Because $L$ is recursively enumerable, there is an enumeration of its
words. Run this enumeration until exactly $\ell_n$ distinct words of
length $n$ have appeared.

At that point every word of $L$ of length $n$ has appeared, because by
definition there are exactly $\ell_n$ such words. We can therefore decide
whether the input word $w$ is among them. Hence $L$ is recursive.
\end{proof}

This theorem gives the precise form of the computability obstruction. For
example, the language
\[
K=\{\langle M\rangle:M\text{ halts on the empty input}\}
\]
is recursively enumerable but not recursive. Therefore its length-counting
sequence cannot be computable.

Thus there cannot exist an effective procedure which, from a finite
description of such a language in an effective r.e. grammar formalism,
produces all coefficients of its ordinary length generating function.
The point is subtle but important:
\begin{equation}
\boxed{L\in\RE\setminus\Rec\quad\Longrightarrow\quad
F_L(z)=\sum\ell_nz^n\text{ has a non-computable coefficient sequence}.}
\end{equation}
This does not mean that the formal power series cannot be written
symbolically. It means that there is no algorithm that computes $\ell_n$
for arbitrary $n$.

\subsection{Recursive languages}

The converse direction is immediate.

\begin{theorem}[Computability of coefficients]
\label{thm:IV}
If $L\in\Rec$, then the coefficient sequence $(\ell_n)_{n\geq0}$ of
$F_L(z)$ is computable.
\end{theorem}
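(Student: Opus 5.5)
The plan is a direct brute-force enumeration; the proof should need no earlier result beyond the definitions of $\Rec$ and of $\ell_n$ in \eqref{eq:language-ogf}. Since $L\in\Rec$, there is a Turing machine $M$ that halts on every input $w\in\Sigma^*$ and accepts exactly when $w\in L$. Because $\Sigma$ is a fixed finite alphabet, the set $\Sigma^n$ of words of length $n$ is finite, with $|\Sigma|^n$ elements, and its elements can be listed effectively, for instance in lexicographic order with respect to a fixed ordering of $\Sigma$.

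The algorithm computing $\ell_n$ on input $n$ proceeds as follows. First, generate the list $w_1,\dots,w_{|\Sigma|^n}$ of all words of length $n$. Next, run $M$ on each $w_j$ in turn. Each run terminates because $M$ is a decider, so the runs can be carried out sequentially with no dovetailing. Finally, output the number of $j$ for which $M$ accepts $w_j$. By definition this number equals $\#\{w\in L:|w|=n\}=\ell_n$. The procedure halts on every $n$, since it consists of finitely many terminating subcomputations. Hence $n\mapsto\ell_n$ is a total computable function.

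No step here is a genuine obstacle. The one point worth stating explicitly is the contrast with Theorem~\ref{thm:III}: totality of the decider is exactly what guarantees that the count over the finite set $\Sigma^n$ terminates. With only a semi-decider, a word of length $n$ outside $L$ could cause a run that never halts, and this is precisely where the r.e.\ case breaks down. I would close with a remark that the argument uses nothing about the analytic nature of $F_L$. Computability of the coefficients therefore holds uniformly, whether $F_L$ is rational, algebraic, $D$-finite, or transcendental. This is the asymmetry the abstract emphasizes.
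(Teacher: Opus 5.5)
Your proof is correct and follows essentially the same route as the paper's: enumerate the finite set $\Sigma^n$, run the total decider for $L$ on each word, and count the accepted ones to obtain $\ell_n$. Your added remarks on why totality is needed (versus Theorem~\ref{thm:III}) and on independence from the analytic class of $F_L$ are accurate but go beyond what the proof requires.
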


\begin{proof}
Given $n$, enumerate the finite set $\Sigma^n$. Since $L$ is recursive,
membership of each word can be decided. Counting the accepted words gives
$\ell_n$.
\end{proof}

Therefore every recursive language has a computable ordinary length
generating series in the coefficientwise sense, regardless of its analytic
classification. This gives the fundamental computability dichotomy
\[
\begin{array}{c}
L\in\Rec\\[2mm]\Downarrow\\[2mm]
(\ell_n)\text{ computable}
\end{array}
\qquad
\begin{array}{c}
L\in\RE\setminus\Rec\\[2mm]\Downarrow\\[2mm]
(\ell_n)\text{ non-computable}.
\end{array}
\]
The distinction is independent of algebraicity.

\subsection{$D$-finiteness as a decidability criterion}

The preceding results give a direct connection between a differential
property of the length generating function and language computability.

\begin{corollary}[$D$-finite length series force decidability]
\label{cor:dfinite-decidable}
Let $L\subseteq\Sigma^*$ be a recursively enumerable language over a finite
alphabet. If its ordinary length generating function
$F_L(z)=\sum_{n\geq0}\ell_nz^n$ is $D$-finite over $\Q(z)$, then $L$ is
recursive. Equivalently,
\[
\boxed{
 L\in\RE\setminus\Rec
 \quad\Longrightarrow\quad
 F_L(z)\text{ is not $D$-finite over }\Q(z).
}
\]
In particular, the length generating function of a non-recursive
recursively enumerable language cannot be algebraic or rational over
$\Q(z)$.
\end{corollary}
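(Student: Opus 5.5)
The plan is to reduce the corollary to Theorem~\ref{thm:III}. It suffices to show that a $D$-finite series $F_L(z)$ with non-negative integer coefficients has a computable coefficient sequence $(\ell_n)$. Once that is in hand, Theorem~\ref{thm:III} gives $L\in\Rec$. The contrapositive is the boxed statement. The final sentence then follows because every rational or algebraic series over $\Q(z)$ is $D$-finite, as recalled in the definition.

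The key step is the classical translation from a linear differential equation to a linear recurrence. We clear denominators so that $p_0,\ldots,p_r\in\Z[z]$, with $p_r\ne0$. We write $p_j(z)=\sum_i c_{j,i}z^i$ and extract the coefficient of $z^n$ in $\sum_j p_j F^{(j)}=0$. This yields a recurrence
\[
\sum_{k=0}^{d} q_k(n)\,\ell_{n+k}=0\qquad(n\ge n_0),
\]
where the $q_k\in\Z[n]$ are explicit polynomials and the top polynomial $q_d$ is not identically zero. One checks that some $q_k$ is nonzero; we reindex so that the largest shift with a nonzero coefficient polynomial is $q_d$. The obstruction here is bookkeeping: a term $c_{j,i}z^iF^{(j)}$ contributes $c_{j,i}(n-i+1)\cdots(n-i+j)\,\ell_{n-i+j}$. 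We group by the shift $k=j-i$, and the leading such polynomial is nonzero because distinct pairs $(j,i)$ with the same shift have distinct degrees $j$ in $n$.

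Now $q_d$ has only finitely many integer roots. Take $N$ larger than all of them and larger than $n_0+d$. For $n\ge N-d$ we can solve $\ell_{n+d}=-q_d(n)^{-1}\sum_{k<d}q_k(n)\ell_{n+k}$ exactly in $\Q$. The algorithm therefore hard-codes the finitely many initial values $\ell_0,\ldots,\ell_{N}$ together with the integer polynomials $q_k$. From these it computes every $\ell_n$ by exact rational arithmetic. This is legitimate because computability of a sequence only asks for the existence of some algorithm, and finite data can always be built in, even if we cannot find that data effectively from a grammar for $L$.

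The main point needing care is the claim that the recurrence is nontrivial and eventually solvable for the top term. This is the only place where the argument could silently fail, for instance if all the $q_k$ vanished identically. I would handle it by the degree argument above, choosing $k$ to be the maximal shift among the nonzero coefficients $c_{j,i}$. Everything else, namely the invocation of Theorem~\ref{thm:III} and the inclusions rational $\subset$ algebraic $\subset$ $D$-finite, is routine.
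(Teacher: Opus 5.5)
Your proposal is correct and follows the paper's route: $D$-finiteness gives a polynomial-coefficient recurrence, hard-coding finitely many initial values makes $(\ell_n)$ computable (non-uniformly), and Theorem~\ref{thm:III} then gives $L\in\Rec$; the only difference is that you derive the recurrence from the differential equation yourself, with a sound degree argument for nontriviality, where the paper cites Stanley. One small index slip needs fixing: since you solve for $\ell_{n+d}$ with $n\ge N-d$, you need $N-d$, not merely $N$, to exceed every integer root of $q_d$.
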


\begin{proof}
The coefficients of a $D$-finite power series over $\Q(z)$ satisfy a
linear recurrence with polynomial coefficients (they form a
$P$-recursive sequence) \cite{Stanley}. Thus, for some $r\geq1$ and
polynomials $q_0,\ldots,q_r\in\Q[n]$, with $q_r\ne0$,
\[
\sum_{j=0}^{r}q_j(n)\ell_{n+j}=0
\]
for all sufficiently large $n$. Since a nonzero polynomial has only
finitely many roots, there is an $N$ such that $q_r(n)\ne0$ for all
$n\geq N$. For these $n$,
\[
\ell_{n+r}=-\frac{1}{q_r(n)}
               \sum_{j=0}^{r-1}q_j(n)\ell_{n+j}.
\]
The recurrence and the finitely many initial coefficients determine an
algorithm computing every $\ell_n$. By Theorem~\ref{thm:III},
$L\in\Rec$. The contrapositive gives the stated obstruction;
algebraic and rational series are $D$-finite.
\end{proof}

The converse is not claimed: decidability ensures only computable
coefficients. The argument is non-uniform; it does not extract a
recurrence and initial values from an arbitrary r.e. grammar.

\section{Analytic Properties and Singularities}

\subsection{Coefficient growth and positive radius}

For every language $L\subseteq\Sigma^*$,
$0\leq\ell_n\leq|\Sigma|^n$. Hence
\[
\limsup_{n\to\infty}\ell_n^{1/n}\leq|\Sigma|,
\]
and therefore $R\geq1/|\Sigma|>0$. Thus the ordinary length generating
function of every language over a finite alphabet has a positive radius of
convergence.

If $L$ is infinite, then $\ell_n>0$ for infinitely many $n$, and hence
$\limsup_{n\to\infty}\ell_n^{1/n}\geq1$. Therefore
\begin{equation}
0<R\leq1
\qquad\text{for every infinite language }L.
\label{eq:radius-language}
\end{equation}

The Cauchy--Hadamard formula gives
\[
R^{-1}=\limsup_{n\to\infty}\ell_n^{1/n}.
\]
Thus the exponential scale of the length counts determines the radius;
no membership or parsing information can be recovered from this fact alone.
If, additionally, $R=1$ and $\ell_n\geq1$ for every $n\geq0$, then
$F_L(r)\geq(1-r)^{-1}$ for $0<r<1$, so $F_L(r)\to\infty$ as $r\to1^-$. This is
a special case of Pringsheim's theorem.

\subsection{Infinite languages and productive recursion}

We now formulate the recursion statement more carefully.

\begin{definition}[Productive recursive cycle in a CFG]
Let $G=(N,\Sigma,P,S)$ be a finite context-free grammar. A
\emph{reachable, terminating, productive recursive cycle} consists of a
nonterminal $A\in N$ and terminal strings $p,q,u,v,w\in\Sigma^*$ such that
\[
S\Rightarrow^*pAq,\qquad
A\Rightarrow^+uAv,\qquad
A\Rightarrow^*w,\qquad |uv|>0.
\]
The derivation $A\Rightarrow^+uAv$ corresponds to a closed directed walk
through $A$ in the nonterminal-dependency graph and can be repeated.
The condition $|uv|>0$ ensures that the recursive traversal, rather than
some unrelated branch, increases the terminal length.
\end{definition}

\begin{theorem}[Singularity and productive recursion for finite CFGs]
\label{thm:V}
Let $L\subseteq\Sigma^*$ be any infinite language over a finite alphabet,
and let $F_L(z)=\sum_{n\geq0}\ell_nz^n$. Then:
\begin{enumerate}[label=(\roman*)]
\item $F_L$ has a finite positive radius of convergence $R$;
\item $F_L$ has a singularity at $z=R$.
\end{enumerate}
If, in addition, $L=L(G)$ for a finite context-free grammar $G$, then
\begin{enumerate}[label=(\roman*),start=3]
\item $G$ contains a reachable, terminating, productive recursive cycle
in the sense of the preceding definition.
\end{enumerate}
\end{theorem}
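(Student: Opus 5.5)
Parts (i) and (ii) follow from the coefficient estimates of the preceding subsection together with Pringsheim's theorem; part (iii) is a pumping-style argument on derivation trees.

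\emph{Parts (i) and (ii).} Since $0\le\ell_n\le|\Sigma|^n$, the Cauchy--Hadamard formula gives $R\ge 1/|\Sigma|>0$. Because $L$ is infinite, $\ell_n\ge1$ for infinitely many $n$, so $\limsup\ell_n^{1/n}\ge1$. This yields $0<R\le1$ as in \eqref{eq:radius-language}, and in particular $R$ is finite and positive.

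For (ii), the coefficients $\ell_n$ are non-negative, so Pringsheim's theorem applies directly: a power series with non-negative coefficients and finite radius $R$ has a singularity at $z=R$. I would recall the standard argument briefly. Suppose $F_L$ were analytic near $R$. Expand about $z_0=R/2$. The Taylor coefficients there are $F_L^{(k)}(R/2)/k!=\sum_n\binom{n}{k}\ell_n(R/2)^{n-k}$, all non-negative. The expansion would converge at some $R+\varepsilon$. Re-summing the non-negative double series (Tonelli) then gives $\sum\ell_n(R+\varepsilon)^n<\infty$, which contradicts the definition of $R$.

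\emph{Part (iii).} Let $G$ generate the infinite language $L$. First pass to the reduced grammar: discard nonterminals that are unreachable from $S$ or cannot derive a terminal string. This does not change $L(G)$, and any cycle found in the reduced grammar is also one in $G$. The reduced grammar may still contain unit and $\varepsilon$-productions, so I would not assume Chomsky normal form. Instead I would use the following bound. Let $b$ be the maximal right-hand-side length and $|N|$ the number of nonterminals. In any derivation tree, a root-to-leaf path with no repeated nonterminal has length at most $|N|$. Hence if no path repeats a nonterminal, the yield has length at most $b^{|N|}$.

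Since $L$ is infinite, pick $x\in L$ with $|x|>b^{|N|}$, and among its derivation trees choose one with the minimal number of nodes. By the bound, some path in this tree repeats a nonterminal $A$. The upper occurrence of $A$ gives $S\Rightarrow^*pAq$ and $A\Rightarrow^+uAv$, and the lower occurrence gives $A\Rightarrow^*w$, with $x=puwvq$. It remains to show $|uv|>0$. If $uv=\varepsilon$, replace the subtree at the upper $A$ by the subtree at the lower $A$. This produces a derivation tree of the same word $x$ with strictly fewer nodes, contradicting minimality.

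\emph{Main obstacle.} The delicate point is exactly this last step. With unit and $\varepsilon$-rules present, a repeated nonterminal need not pump any length, so some argument is needed to rule out $uv=\varepsilon$. The minimal-tree choice is my plan for that. The other steps are standard.
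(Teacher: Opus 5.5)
Your proposal is correct and follows the paper's proof. Parts (i)--(ii) use the coefficient bounds plus Pringsheim, and part (iii) takes a minimal derivation tree and excises a repeated nonterminal with $uv=\varepsilon$ to contradict minimality; the differences are cosmetic: the paper argues contrapositively (no productive cycle forces depth at most $|N|$, hence bounded yields) instead of fixing a word longer than $b^{|N|}$, and your passage to the reduced grammar is unnecessary, since a cycle extracted from an actual derivation tree of a terminal word is automatically reachable and terminating.
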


\begin{proof}
Since $\Sigma$ is finite, $\ell_n\leq|\Sigma|^n$, so $R>0$.
Infinitude of $L$ implies that $\ell_n>0$ for infinitely many $n$, hence
$R\leq1$. Thus $0<R\leq1$. Because $\ell_n\geq0$, Pringsheim's theorem
implies that $z=R$ is singular
\cite{Pringsheim,FlajoletSedgewick}.

For (iii), suppose $L=L(G)$ for a finite CFG. For each terminal word,
choose a derivation tree with the smallest possible number of
nonterminal nodes. If one nonterminal $A$ appeared twice on the same
root-to-leaf path, the intervening part of the tree would give a
recursion $A\Rightarrow^+uAv$ after fully expanding all side branches
into terminal words. If $|uv|>0$, this would be a reachable,
terminating, productive recursive cycle. If $|uv|=0$, the intervening
part could instead be deleted, producing a smaller derivation tree
for the same terminal word, contrary to minimality.

Thus, in the absence of any such productive recursive cycle, each
minimal derivation tree has no repeated nonterminal along a path.
Its depth is therefore bounded by $|N|$. Since the grammar has finitely
many productions, their right-hand sides have uniformly bounded
length, so the terminal yields of these trees have uniformly bounded
length. This contradicts the fact that $L$ is infinite. Hence a
productive recursive cycle must exist.
\end{proof}

The theorem establishes a necessary relationship between the existence
of productive grammatical recursion and the existence of a finite
positive-real singularity. For every infinite language generated by a
finite context-free grammar, a reachable and terminating growing
self-embedding must occur. Independently, its nonnegative integer
length-counting coefficients imply that the positive radius of
convergence is a singular point.

This does \emph{not} establish a one-to-one correspondence between
individual recursive productions and singularities. Such a stronger
statement would require additional hypotheses and a precise definition
of the proposed correspondence. Nor does productive recursion by itself
imply a natural boundary. More precisely, the two implications are
\[
\text{infinite CFL generated by a finite CFG}
\Longrightarrow\text{productive recursion},
\]
while independently
\[
\text{infinite language over a finite alphabet}
\Longrightarrow 0<R\leq1
\Longrightarrow \text{a positive-real singularity}.
\]
For example, $S\to aS\mid\varepsilon$ generates $\{a\}^*$, and
$F(z)=1/(1-z)$ has a simple pole at $z=1$. By contrast, a finite language
has a polynomial generating function and therefore no finite
singularity.

\subsection{Integer coefficients: the Polya--Carlson dichotomy}

The arithmetic condition that the coefficients be integers imposes a much
stronger restriction when the radius of convergence equals $1$.

\begin{theorem}[Polya--Carlson]
\label{thm:PC}
Let $F(z)=\sum_{n\geq0}a_nz^n$, $a_n\in\Z$, and suppose that $F$ has radius
of convergence $1$. Then either
\begin{enumerate}[label=(\roman*)]
\item $F(z)$ is rational, or
\item the unit circle $|z|=1$ is a natural boundary.
\end{enumerate}
\end{theorem}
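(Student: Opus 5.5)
The plan is to prove the contrapositive: if $F$ extends analytically across some open arc of $|z|=1$, then $F$ is rational. Rationality will follow from Kronecker's criterion. A series $\sum a_nz^n$ is rational if and only if its Hankel determinants
\[
\Delta_n=\det\bigl(a_{i+j}\bigr)_{0\le i,j\le n}
\]
vanish for all sufficiently large $n$. Since $a_n\in\Z$, each $\Delta_n$ is an integer. It therefore suffices to show $|\Delta_n|<1$ for large $n$, which forces $\Delta_n=0$. The whole proof reduces to an analytic upper bound on Hankel determinants. The bound will come from the fact that $F$ continues to a domain whose \emph{transfinite diameter} (logarithmic capacity) is strictly less than $1$.

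The steps are as follows.

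\textbf{Step 1: the continuation domain.} Suppose $F$ continues across an arc $\gamma$ of the unit circle. Then $F$ is analytic on $D\cup\gamma\cup U$, where $D$ is the unit disc and $U$ is a small lune outside $D$ bordering $\gamma$. Let $\Omega$ be a simply connected domain containing $D$ and this lune, and choose it symmetric about the real axis after a rotation. Using $w=1/z$, the complement $E=\{1/z: z\notin\Omega\}$ is a compact set inside the closed unit disc, missing an arc. A direct estimate (or monotonicity of capacity, comparing with a circular arc of the unit circle minus a gap) gives $\operatorname{cap}(E)<1$.

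\textbf{Step 2: bounding $\Delta_n$.} Write $a_n=\frac{1}{2\pi i}\oint_C F(z)z^{-n-1}\,dz$ along a contour $C\subset\Omega$ near $\partial\Omega$. In the variable $w=1/z$ this is a moment integral $a_n=\int w^n\,d\mu(w)$ against a complex measure $\mu$ on a curve approximating $\partial E$. Hence
\[
\Delta_n=\frac{1}{(n+1)!}\int\cdots\int\prod_{i<j}(w_i-w_j)^2\,d\mu(w_0)\cdots d\mu(w_n),
\]
by the standard Heine/Andreief identity. Bounding the Vandermonde product by the $n$-th diameter $\delta_{n+1}(E_\epsilon)$ gives
\[
|\Delta_n|\le M^{n+1}\,\delta_{n+1}(E_\epsilon)^{n(n+1)}.
\]
Since $\delta_{n}(E_\epsilon)\to\operatorname{cap}(E_\epsilon)<1$, the right side is eventually $<1$. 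Here $M$ bounds $|F|$ times the contour length, and this is where the fine choice of contour enters. This gives $\Delta_n=0$ for large $n$.

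\textbf{Step 3: conclusion.} Kronecker's theorem then yields that $F$ is rational. If $F$ is not rational, no arc of $|z|=1$ admits continuation, so the unit circle is a natural boundary. Radius $1$ guarantees that the dichotomy is exhaustive.

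The main obstacle is Step 1 together with the bookkeeping in Step 2. One must produce a concrete region whose complement has capacity strictly below $1$ while keeping the growth term $M^{n+1}$ harmless, since it is dominated by the quadratic exponent. I would first try Pólya's original route via the conformal map of the complement onto the exterior of a disc, whose radius is the capacity. Alternatively, since the paper only cites this classical result, it is reasonable to invoke it from \cite{FlajoletSedgewick} and the Pólya--Carlson literature rather than reprove the potential theory in full.
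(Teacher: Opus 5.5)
The paper does not prove this theorem. It quotes it as a classical result and cites P\'olya and Carlson, so there is no internal argument to compare with. Your sketch reconstructs P\'olya's capacity proof (Kronecker's Hankel criterion plus a transfinite-diameter bound), and the overall argument is correct. The Andreief/Heine identity holds for complex measures. The Vandermonde factor is bounded by $\delta_{n+1}(E_\epsilon)^{n(n+1)}$ because the integration points lie on the contour. The Fekete--Szeg\H{o} limit $\delta_n\to\mathrm{cap}$ then gives $|\Delta_n|\le M^{n+1}\rho^{n(n+1)}\to0$ for some $\rho<1$. Integrality forces $\Delta_n=0$ for \emph{all} large $n$, which is exactly what Kronecker's criterion needs.

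The step you flag as the main obstacle is not delicate. Any fixed contour in the continuation region works, since $M$ enters only to the power $n+1$. Compared with the paper's bare citation, your route proves the stronger P\'olya theorem: if $F(1/w)$ continues to the complement of any compact set of capacity $<1$, then $F$ is rational. It would also make the later $D$-finite corollary self-contained.

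One justification needs repair. The parenthetical ``monotonicity of capacity, comparing with a circular arc'' does not work. $E$ is a closed disc with a lune removed, so it is not contained in any arc. Its outer boundary also includes the inner edge of the lune, which lies in the open disc.

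A correct argument runs as follows. Let $g_E$ be the Green function of the unbounded component $\Omega_E$ of $\widehat{\mathbb C}\setminus E$, with pole at $\infty$.
\begin{enumerate}[label=(\roman*)]
\item Since $E\subseteq\overline{\mathbb D}$, the function $h=g_E-\log|w|$ is harmonic on $|w|>1$, including at $\infty$, where $h(\infty)=-\log\mathrm{cap}(E)$. It is $\ge0$ there by the minimum principle, because $g_E\ge0$ on $|w|=1$.
\item If $\mathrm{cap}(E)=1$, then $h$ attains its minimum $0$ at the interior point $\infty$, so $h\equiv0$.
\item The set $\Omega_E\setminus\{\infty\}$ is connected, contains the lune, and excludes $0\in E$. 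Harmonic continuation then gives $g_E=\log|w|<0$ inside the lune, contradicting $g_E>0$.
\end{enumerate}
Hence $\mathrm{cap}(E)<1$. By continuity of capacity under decreasing compact sets, you can then choose the compact set $E_\epsilon\supset E$ bounded by your contour with $\mathrm{cap}(E_\epsilon)<1$. Alternatively, compare $E$ with an explicit two-circle lune $\overline{\mathbb D}\setminus B(w_0,r)$, whose capacity can be computed by a M\"obius-plus-power conformal map.
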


The theorem was conjectured by Polya and proved by Carlson
\cite{Polya,Carlson}. It is one of the basic connections between arithmetic
properties of coefficients and analytic continuation.

\begin{corollary}
If $F(z)\in\Z[[z]]$ has radius $1$ and is $D$-finite, then $F(z)$ is
rational.
\end{corollary}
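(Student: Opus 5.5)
By Theorem~\ref{thm:PC}, $F$ is either rational or has the unit circle as a natural boundary. If $F$ is rational we are done, so the whole proof consists of excluding the second alternative for a $D$-finite series. The plan is to show that a $D$-finite series has only finitely many singularities, all located at zeros of the leading polynomial coefficient of its differential equation. A natural boundary consists of infinitely many singular points (in fact every point of $|z|=1$), so this will give a contradiction.

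Concretely, suppose $F$ satisfies
\[
p_r(z)F^{(r)}(z)+\cdots+p_0(z)F(z)=0, \qquad p_r\ne0 .
\]
If $r=0$, then $p_0F=0$ forces $F=0$, which is rational. So assume $r\geq1$, and let $Z$ be the finite set of zeros of $p_r$ in $\C$. Take a point $\zeta$ with $|\zeta|=1$ and $\zeta\notin Z$, and choose a small disc $D$ around $\zeta$ that avoids $Z$. On $D$ the equation is a linear ODE with holomorphic coefficients $p_j/p_r$.

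Pick a point $z_0\in D$ with $|z_0|<1$. By Cauchy's existence theorem for linear ODEs, there is a solution $G$ holomorphic on all of $D$ whose initial values $G^{(k)}(z_0)$, $0\le k<r$, equal $F^{(k)}(z_0)$. By uniqueness of solutions, $G$ agrees with $F$ on the connected set $D\cap\{|z|<1\}$. Hence $G$ continues $F$ analytically across $\zeta$. This contradicts $\zeta$ lying on a natural boundary. Therefore alternative (ii) is impossible, and $F$ is rational.

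The main technical point is the local existence theorem: a linear ODE with holomorphic coefficients on a disc has holomorphic solutions on the \emph{whole} disc, not merely near $z_0$. This is the standard fact that the singularities of solutions of linear ODEs lie only at singular points of the equation. It can be proved by majorant series, or by noting that the solutions form a locally constant system on $D$, and I will cite it as standard. One further detail must be checked: that $D\cap\{|z|<1\}$ is connected, which holds since it is the intersection of two discs and hence convex. Note that the argument uses integrality of the coefficients only through Theorem~\ref{thm:PC}, and uses $\Q$-coefficients not at all beyond the existence of the differential equation.
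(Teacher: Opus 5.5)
Your proof is correct and follows the same route as the paper. You invoke Polya--Carlson, then continue $F$ analytically through a point of $|z|=1$ where $p_r\neq0$ (such a point exists because $p_r$ has finitely many zeros), which contradicts the natural boundary. You simply spell out more than the paper does: the case $r=0$, the continuation via existence and uniqueness on a disc $D$, and the connectedness of $D\cap\{|z|<1\}$; the paper cites the analytic-continuation theorem for linear differential equations in one line.
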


\begin{proof}
Suppose, for contradiction, that $F$ is not rational. Since
$F\in\Z[[z]]$ has radius of convergence $1$, the Polya--Carlson theorem
implies that the unit circle $|z|=1$ is a natural boundary of $F$.

On the other hand, $D$-finiteness means that $F$ satisfies a nontrivial
linear differential equation
\[
  p_r(z)F^{(r)}(z)+\cdots+p_1(z)F'(z)+p_0(z)F(z)=0,
\]
where $p_0,\ldots,p_r\in\C[z]$ and $p_r\neq 0$. At any point $z_0$
with $p_r(z_0)\neq 0$, the equation is a linear differential equation
with analytic coefficients and nonvanishing leading coefficient. The
analytic-continuation theorem for linear differential equations therefore
allows $F$ to be continued analytically through such a point of the unit
circle. At least one such point exists because the nonzero polynomial
$p_r$ has only finitely many zeros.

This contradicts the assertion that the entire unit circle is a natural
boundary. Hence $F$ is rational.
\end{proof}

Thus, at radius $1$,
\[
\boxed{\text{integer coefficients}+\text{$D$-finite}
\Longrightarrow\text{rational}.}
\]
This is an especially useful restriction when generating functions arise
from formal languages.

\subsection{Algebraic and \texorpdfstring{$D$}{D}-finite examples}

\paragraph{Algebraic functions.}

A standard example is the Catalan generating function
\[
C(z)=\sum_{n\geq0}C_nz^n
=\frac{1-\sqrt{1-4z}}{2z},
\qquad
C_n=\frac1{n+1}\binom{2n}{n}.
\]
Its radius of convergence is $R=1/4$, and $z=1/4$ is an algebraic branch
point.

The Dyck language is generated by
$S\to\varepsilon\mid(S)S$. The number of Dyck words of length $2n$ is
$C_n$. Consequently its ordinary word-length generating function is
\[
F_D(z)=\sum_{n\geq0}C_nz^{2n}=C(z^2)
=\frac{1-\sqrt{1-4z^2}}{2z^2}.
\]
Thus the precise statement is
\[
\boxed{\text{Dyck language}\longrightarrow
\text{algebraic semilength generating function}.}
\]
The distinction between word length and semilength is important.

\paragraph{$D$-finite transcendental functions.}

There are also positive-integer-coefficient $D$-finite functions that are
not algebraic. Consider
\[
F(z)=\sum_{n\geq0}\binom{2n}{n}^2z^n.
\]
The coefficients satisfy
\[
(n+1)^2a_{n+1}=4(2n+1)^2a_n,
\]
so $F$ satisfies
\begin{equation}
z(1-16z)F''(z)+(1-32z)F'(z)-4F(z)=0.
\label{eq:binomial-square-ode}
\end{equation}
It is therefore $D$-finite. On the other hand, it is a hypergeometric
function,
\[
F(z)={}_2F_1\!\left(\frac12,\frac12;1;16z\right),
\]
and is expressible in terms of the complete elliptic integral of the first
kind. The corresponding analytic function is transcendental
\cite{Flajolet,Stanley}.

A formal-language interpretation is also available. Flajolet considers
languages encoding two-dimensional walks and obtains an explicitly defined
language $I$ whose words of length $2n$ are counted by
\[
[z^{2n}]I(z)=\binom{2n}{n}^2,
\qquad [z^{2n+1}]I(z)=0
\]
\cite{Flajolet}. Thus the ordinary \emph{word-length} generating
function of this language is
\[
I(z)=\sum_{n\geq0}\binom{2n}{n}^2 z^{2n}=F(z^2).
\]
In contrast, $F(z)$ counts \emph{semilength} $n$ (half the number of
steps), not ordinary word length. In particular, their radii of
convergence differ: $R_F=1/16$ and $R_I=1/4$. Both series are
$D$-finite and transcendental. This provides a concrete bridge
\[
\text{formal language}\longrightarrow\text{integer coefficient sequence}
\longrightarrow D\text{-finite transcendental function}.
\]
The important point is that
\[
\boxed{D\text{-finite}\not\Longrightarrow\text{algebraic}.}
\]

\subsection{Siegel's \texorpdfstring{$G$}{G}-functions}

Not every transcendental function has a natural boundary. An important
intermediate class consists of $G$-functions, introduced by Siegel in his
work on Diophantine approximation \cite{Siegel}.

\begin{definition}[Siegel $G$-function]
Let $F(z)=\sum_{n\geq0}a_nz^n$, $a_n\in K$, where $K$ is a number field.
The series is a $G$-function if:
\begin{enumerate}[label=(\roman*)]
\item $F(z)$ satisfies a non-zero linear differential equation with
coefficients in $K(z)$;
\item the absolute values of all Galois conjugates of $a_n$ are bounded by
$C^{n+1}$ for some constant $C>0$;
\item the least common denominator of $a_0,a_1,\ldots,a_n$ is at most
$C^{n+1}$;
\item the series has positive radius of convergence.
\end{enumerate}
\end{definition}

For integer coefficients, the denominator condition is automatic, while the
growth condition remains nontrivial. By Eisenstein's theorem on algebraic
power series, every algebraic power series with rational coefficients is a
$G$-function; more generally, this holds for algebraic power series with
algebraic coefficients. There are also transcendental $G$-functions,
including the hypergeometric series $F(z)$ considered above: its integer
coefficients grow at most exponentially, and it is $D$-finite
\cite{Siegel,Stanley}.

Thus the analytic hierarchy should not be reduced to the dichotomy
\[
\text{algebraic}\quad\text{versus}\quad\text{natural boundary}.
\]
In the following strict inclusions, all series lie in
$\overline{\mathbb Q}[[z]]$, and algebraicity and $D$-finiteness are
understood over $\overline{\mathbb Q}(z)$:
\[
\boxed{
\begin{gathered}
\text{algebraic power series over }\overline{\mathbb Q}(z)
\;\subsetneq\; G\text{-functions}\;\subsetneq\; \\[-1pt]
D\text{-finite power series over }\overline{\mathbb Q}(z).
\end{gathered}}
\]
The first inclusion is strict by the transcendental $G$-function $F$ above.
The second is strict because $e^z$ is $D$-finite over $\mathbb Q(z)$,
but is not a $G$-function: the denominators $n!$ of its Taylor
coefficients are not bounded exponentially.

\subsection{Natural boundaries and lacunary examples}

\paragraph{Infinite singularity sets.}

A different phenomenon occurs when the generating function has infinitely
many singularities. A classical example is the Lambert series
\[
L(z)=\sum_{n\geq1}\frac{z^n}{1-z^n}
=\sum_{m\geq1}d(m)z^m,
\]
where $d(m)$ denotes the number of positive divisors of $m$.
This identity follows from
\[
\sum_{n\geq1}\frac{z^n}{1-z^n}
=\sum_{n\geq1}\sum_{k\geq1}z^{nk}.
\]
The divisor-function coefficients arise naturally in formal-language
enumeration. Flajolet gives an explicit language construction whose
generating function is a divisor-type series and uses its singular behaviour
to prove transcendence and inherent ambiguity \cite{Flajolet}.

The corresponding language should not be confused with the simple regular
language $\{a^kb^m:k,m\geq1\}$, whose ordinary length generating function
is merely $z^2/(1-z)^2$. The divisor function arises from multiplicative
length constraints, not from ordinary concatenation of two independent
blocks.

\paragraph{Hadamard-gap series.}

Let $n_0<n_1<n_2<\cdots$ satisfy a Hadamard gap condition
$n_{k+1}/n_k\geq q>1$. Then the lacunary series
\[
F(z)=\sum_{k\geq0}z^{n_k}
\]
has the unit circle as a natural boundary \cite{HadamardGap,FlajoletSedgewick}.
The simplest example is
\[
F(z)=\sum_{k\geq0}z^{2^k}=z+z^2+z^4+z^8+\cdots.
\]
Since $2^{k+1}/2^k=2$, the Hadamard gap theorem applies. Consequently
$|z|=1$ is a natural boundary. Hence $F$ is transcendental. In particular
it cannot be algebraic, because an algebraic function has only finitely many
branch points on its compact Riemann surface.

The corresponding unary language is
\[
L_{\mathrm{pow}}=\{a^{2^k}:k\geq0\}.
\]
Its ordinary length generating function is exactly
$F_{L_{\mathrm{pow}}}(z)=\sum_{k\geq0}z^{2^k}$. The language is not regular
and in fact is not context-free; it lies in the indexed-language class
\cite{Aho,HopcroftUllman}. Thus an arithmetic constraint on allowable
lengths can produce a natural boundary.

\paragraph{Factorial gaps.}

Another example is
\[
F(z)=\sum_{n\geq1}(2z)^{n!}
=\sum_{n\geq1}2^{n!}z^{n!}.
\]
The sum starts at $n=1$, since $0!=1!=1$ and length $1$ must be
counted only once. The radius of convergence is $R=1/2$. Since
$(n+1)!/n!=n+1\to\infty$, the exponents satisfy an even stronger gap
condition than required by Hadamard's theorem. Therefore $|z|=1/2$ is a
natural boundary.

The associated unary language
$L_{\mathrm{fact}}=\{a^{n!}:n\geq0\}$ has
$\sum_{n\geq1}z^{n!}$ as its length generating function. If one wishes to
obtain radius $1/2$ while retaining integer coefficients, one may instead
use the binary language
\[
L'_{\mathrm{fact}}=\{w\in\{0,1\}^*:|w|=n!\text{ for some }n\},
\]
whose generating function is
$\sum_{n\geq1}2^{n!}z^{n!}$. This example shows that transcendence and
natural boundary can coexist with a completely recursive length constraint.

For comparison, $(1-z)^{-1}$ has an isolated pole,
the Catalan series has an algebraic branch point, and
$\sum_{n\geq0}z^{2^n}$ has a natural boundary. In particular,
\[
\text{natural boundary}\Longrightarrow\text{transcendental},
\]
but not conversely: the $D$-finite transcendental walk series described
above has only finitely many possible finite singularities.

\subsection{A unified analytic and arithmetic classification}

The four-way division into rational, algebraic non-rational, $D$-finite
transcendental, and non-$D$-finite transcendental series is compatible with
the nested hierarchy of Siegel $G$-functions.  Indeed, the $G$-function
condition refines the $D$-finite part: every algebraic power series over
$\Q(z)$ is a $G$-function, and transcendental $D$-finite series can be
$G$-functions or non-$G$-functions.  Figure~\ref{fig:classification}
combines the two viewpoints in a single tree.

\begin{figure}[htbp]
\centering
\begin{forest}
for tree={draw,rounded corners,align=center,font=\footnotesize,
  parent anchor=south,child anchor=north,
  edge={semithick},inner sep=3pt,l sep=10pt,s sep=6pt}
[Convergent power series over $\Q$, font=\normalsize, inner sep=5pt
  [$D$-finite
    [Siegel $G$-functions
      [Algebraic
        [Rational
          [{\scriptsize Poles (or none)},draw=none,edge={draw=none}]
        ]
        [Algebraic\\non-rational
          [{\scriptsize Branch points possible},draw=none,edge={draw=none}]
        ]
      ]
      [Transcendental\\$G$-functions
        [{\scriptsize Elliptic / hypergeometric},draw=none,edge={draw=none}]
      ]
    ]
    [$D$-finite\\non-$G$
      [{\scriptsize Example: $e^z$ (entire)},draw=none,edge={draw=none}]
    ]
  ]
  [Non-$D$-finite\\(transcendental)
    [{\scriptsize Natural boundary possible,\\not necessary},draw=none,edge={draw=none}]
  ]
]
\end{forest}
\caption{A unified classification of convergent power series over $\Q$.
The boxed nodes denote nested function classes; the unboxed leaves give
\emph{illustrative}, not defining, analytic behavior.  The four broad
(disjoint) types are rational, algebraic non-rational, $D$-finite
transcendental (the two transcendental $D$-finite branches), and
non-$D$-finite transcendental.  Siegel $G$-functions consist of the
algebraic branch together with the transcendental $G$-function branch.}
\label{fig:classification}
\end{figure}

The distinction between $G$-functions and general $D$-finite functions
matters over $\Q[[z]]$: for example, $e^z$ is $D$-finite but not a
$G$-function because its coefficients have factorial denominators.  In
contrast, an ordinary length generating function $F_L(z)=\sum \ell_nz^n$
of a language over a finite alphabet $\Sigma$ satisfies
$0\leq\ell_n\leq|\Sigma|^n$ with $\ell_n\in\Z$.  Hence 
\[
\boxed{F_L\text{ is $D$-finite over $\Q(z)$}
\quad\Longleftrightarrow\quad F_L\text{ is a Siegel $G$-function}.}
\]
The forward implication follows from the integer coefficients, their
exponential bound, and $D$-finiteness; the reverse implication holds by
definition.  Thus the $D$-finite non-$G$ branch of the unified tree
\emph{cannot occur} for ordinary length generating functions over finite
alphabets.  Singularity behavior is a separate refinement: a
non-$D$-finite function need not have a natural boundary.

\section{Discussion and Conclusion}

The paper exhibits three complementary principles. First, strong grammatical
hypotheses provide effective counting descriptions: regular languages have
rational length generating functions, unambiguous context-free languages have
algebraic ones, and stack-indexed recurrences describe particular indexed
languages. The finite-algebra criterion in Section~2 is a sufficient
algebraic condition on a chosen polynomial system, not a characterization of
all grammar-derived systems.

Second, the reverse inference loses information. The diagonal construction
in Section~3 exhibits a decidable non-indexed language with the rational
length generating function $1/(1-z)$. Consequently, rationality or
algebraicity alone cannot determine the grammatical class or recover a unique
grammar.

Third, computability imposes restrictions beyond analytic simplicity.
Within the recursively enumerable languages, the
length-counting sequence is computable exactly when membership is decidable.
Consequently, $D$-finiteness of the length generating function forces
decidability within the r.e. class; non-recursive r.e. languages cannot
have $D$-finite length generating functions over $\Q(z)$.
Thus it is important to distinguish a formally associated series from a
coefficient sequence that is algorithmically accessible. In particular, an
arbitrary finite unrestricted grammar cannot, in general, be translated
effectively into an algorithm computing every length coefficient.

The analytic results give further one-way constraints. Infinite languages
over finite alphabets have a positive finite radius and a positive-real
singularity; finite CFGs generating infinite languages require productive
self-embedding. For integer coefficients at radius $1$, the Polya--Carlson
theorem yields a rational/natural-boundary dichotomy, while $D$-finite
functions cannot have a natural boundary. Algebraic, $G$-function, and
$D$-finite distinctions describe genuine analytic structure, but do not
invert the many-to-one map $L\mapsto F_L$.

The main conclusion is therefore an asymmetry: specified grammars and
computational constraints can force properties of length generating
functions, whereas those properties alone rarely characterize the language.

\appendix
\section{Examples of Integer-Coefficient Generating Functions}


Let
\[
F(z)=\sum_{n=0}^{\infty}a_nz^n,
\qquad a_n\in\Z_{\geq0},
\]
with radius of convergence $R>0$.
The following table summarizes representative examples.

\begin{center}
\footnotesize
\renewcommand{\arraystretch}{1.45}
\begin{tabularx}{\linewidth}{@{}>{\raggedright\arraybackslash}p{2.0cm}
>{\raggedright\arraybackslash}p{3.9cm}
>{\raggedright\arraybackslash}p{3.1cm}
>{\raggedright\arraybackslash}X@{}}
\toprule
\textbf{Class} & \textbf{Generating function} &
\textbf{Analytic structure} & \textbf{Language / combinatorial source}\\
\midrule
Rational & $\displaystyle\frac1{1-z}$ & Simple pole at $1$ & $\{a^n:n\geq0\}$\\
Rational & $\displaystyle\frac1{1-2z}$ & Simple pole at $1/2$ & All binary words\\
Algebraic & $\displaystyle\frac{1-\sqrt{1-4z}}{2z}$ & Square-root branch point at $1/4$ & Dyck paths (semilength)\\
$D$-finite transcendental & $\displaystyle\sum_{n\geq0}\binom{2n}{n}^2z^n$ & Finite singular locus; elliptic / hypergeometric & Two-dimensional walks (semilength; the ordinary GF is $F(z^2)$)\\
Lambert / divisor & $\displaystyle\sum_{n\geq1}\frac{z^n}{1-z^n}$ & Infinitely many boundary singularities & Divisor-type combinatorial languages\\
Natural boundary & $\displaystyle\sum_{n\geq0}z^{2^n}$ & Unit circle is a natural boundary & $\{a^{2^n}:n\geq0\}$\\
Natural boundary & $\displaystyle\sum_{n\geq1}z^{n!}$ & Unit circle is a natural boundary & $\{a^{n!}:n\geq0\}$\\
\bottomrule
\end{tabularx}
\end{center}

The table is deliberately not presented as an exhaustive classification. In
particular, the classes of transcendental functions with infinitely many
isolated singularities, natural boundaries, $D$-finite functions,
$G$-functions, and more general recursively generated functions overlap in
several ways.

\section{A More Precise Language--Series Dictionary}

The examples above support the following hierarchy of implications:
\[
\begin{array}{ccccc}
\text{regular} &\Longrightarrow& \text{rational} &\Longrightarrow& \text{algebraic}\\[3pt]
&&&&\\[-7pt]
&&\text{unambiguous CFL}&\Longrightarrow&\text{algebraic}\\[3pt]
&&&&\\[-7pt]
&&\text{indexed}&\Longrightarrow&\text{more general functional equations}
\end{array}
\]
but there is no converse chain
\[
\text{analytic class}\Longrightarrow\text{unique language class}.
\]

For recursively enumerable languages, there is also a one-way implication:
\[
\boxed{L\in\RE,\quad F_L\text{ $D$-finite over }\Q(z)
\quad\Longrightarrow\quad L\in\Rec.}
\]

In particular, $1/(1-z)$ can encode a regular language, an indexed language,
or a recursive language outside the indexed hierarchy. Similarly,
$\sum_{n\geq0}z^{2^n}$ is transcendental with a natural boundary, but the
language $\{a^{2^n}:n\geq0\}$ is recursive.

Therefore the correct conceptual interpretation is not
\[
\text{more complicated grammar}\Longrightarrow\text{more complicated analytic function}
\]
as an equivalence, but rather
\[
\boxed{\text{specific grammatical mechanisms impose analytic constraints},}
\]
while
\[
\boxed{\text{analytic behaviour alone generally does not determine grammatical complexity}.}
\]
This asymmetry is the principal structural conclusion of the paper.


\end{document}